\newtheorem{mylemma}{Lemma}
\newtheorem{mydefinition}{Definition}
\newtheorem{mytheorem}{Theorem}
\newtheorem{myremark}{Remark}
\newtheorem{mycorollary}{Corollary}
\newtheorem{myexample}{Example}
\newcommand{\ket}[1]{\left| #1 \right\rangle}
\newcommand{\ketbra}[2]{| #1 \rangle\! \langle #2 |}
\newcommand{\ignore}[1]{}
\DeclareMathOperator{\Tr}{Tr}
\DeclareMathOperator{\Ad}{Ad}
\DeclareMathOperator{\Span}{Span}
\DeclareMathOperator{\id}{id}
\newcommand{\TrB}{\Tr_{\textsc{b}}}
\newcommand{\TrW}{\Tr_{\textsc{w}}}
\newcommand{\identity}{\mathds{1}}
\newcommand{\Sgp}{\mathcal{G}} % semigroup
\newcommand{\HS}{\mathcal{H}_{\textsc{s}}}
\newcommand{\HB}{\mathcal{H}_{\textsc{b}}}
\newcommand{\HW}{\mathcal{H}_{\textsc{w}}}
\newcommand{\HSB}{\HS\otimes\HB}
\newcommand{\BL}{\mathcal{B}}
\newcommand{\BHS}{\BL(\HS)}
\newcommand{\BHW}{\BL(\HW)}
\newcommand{\BHSB}{\BL(\HSB)}
\newcommand{\UW}{\mathrm{U}(\HW)}
\newcommand{\USB}{\mathrm{U}(\HSB)}
\newcommand{\DD}[1]{\mathcal{D}_{\textsc{#1}}}
\newcommand{\DS}{\DD{s}}
\newcommand{\DB}{\DD{b}}
\newcommand{\DSB}{\DD{sb}}
\newcommand{\DSW}{\DD{sw}}
\newcommand{\DBW}{\DD{bw}}
\newcommand{\DSBW}{\DD{sbw}}
\newcommand{\rhoS}{\rho_{\textsc{s}}}
\newcommand{\rhoB}{\rho_{\textsc{b}}}
\newcommand{\rhoSB}{\rho_{\textsc{sb}}}
\newcommand{\rhoW}{\rho_{\textsc{w}}}
\newcommand{\rhoBW}{\rho_{\textsc{bw}}}
\newcommand{\rhoSBW}{\rho_{\textsc{sbw}}}
\newcommand{\CS}{\mathcal{V}} %Consistent subspace
\newcommand{\rmT}{\mathrm{T}}
\newcommand{\jmdotimes}{\!\otimes\!}
\begin{document}

\title{Beyond Complete Positivity}
\author{Jason M. Dominy$^{(1,4)}$ and Daniel A. Lidar$^{(1,2,3,4)}$}
\affiliation{Departments of $^{(1)}$Chemistry, $^{(2)}$Physics, $^{(3)}$Electrical Engineering, $^{(4)}$Center for Quantum Information Science \& Technology, University of Southern California, Los Angeles, CA 90089, USA}

\begin{abstract}
We provide a general and consistent formulation for linear subsystem quantum dynamical maps, developed from a minimal set of 
postulates, primary among which is a relaxation of the usual, restrictive assumption of uncorrelated initial system-bath states.  We describe the space of possibilities admitted by this formulation, namely that, far from being limited to only completely positive (CP) maps, essentially any $\mathbb{C}$-linear, Hermiticity-preserving, 
trace-preserving 
subsystem map 
can arise as a legitimate subsystem dynamical map from a joint unitary evolution of a system coupled to a bath.  The price paid for this added generality is a trade-off between the set of admissible initial states and the allowed set of joint system-bath unitary evolutions. 
As an application we present a simple example of a non-CP map constructed as a subsystem dynamical map that violates some fundamental inequalities in quantum information theory, such as the quantum data processing inequality. 
%These results present a challenge to the theory of open quantum systems, if it is to account for correlated initial conditions.
\end{abstract}
\maketitle

\textit{Introduction}.---%
The theory of open quantum systems deals with quantum subsystems that interact with an environment, or bath \cite{Breuer:book}. It has largely been built on a framework of completely positive (CP) maps \cite{Stinespring:1955,Kraus:book}.  Complete positivity \cite{comment:CPDef} is implied by the assumption that the initial system-bath states are uncorrelated (i.e., tensor product states), but it can arise even when the initial system-bath states are classically correlated \cite{Rod:08,SL}. Typically, CP maps are good models of subsystem dynamics when system-bath correlations (or the observer's knowledge of the correlations) decays quickly compared to the time scales of the system evolution, i.e., when the system is Markovian.  Correlated initial states, and therefore non-CP maps, may be expected to be needed as the methods of quantum information theory are applied to systems as diverse as excitons in the condensed phase \cite{Rebentrost:2009}, nuclear spins in semiconductors \cite{Witzel:2006}, and quantum optical systems \cite{Liu:2011}, all exhibiting non-Markovian behavior.  When such open system evolutions can be modelled by linear subsystem dynamical maps, these maps will not typically be CP.  Despite much recent attention and progress (e.g., \cite{Stelmachovic:2001,Salgado:2002,Hayashi:2003, Jordan:2004,Salgado:2004,Shaji:2005,Carteret:2008,SL2,Rodriguez-Rosario:2010,UshaDevi:2011,Modi:2012,Brodutch:2013,McCracken:2013, McCracken:2013a,Buscemi:2014,Liu:2014,Dominy:2015}), the situation for more general initial conditions (e.g., for families of thermal states) is not well understood. Is completely positivity indispensable? Is there a consistent framework for the dynamics arising from general initial conditions? 

In this work, using a minimal set of postulates, we outline a general formulation for linear subsystem dynamical maps in the presence of initial system-bath correlations, and explore some of its implications.  In particular, we prove a representation theorem for subsystem maps giving necessary and sufficient conditions for such maps to be derivable as subsystem dynamics within the formalism.  Via this theorem we have two means of studying the relationship between initial states and subsystem evolutions -- by investigating the properties of maps arising from certain spaces of initial states, and by looking for representations for maps exhibiting certain properties.  We also demonstrate, through the construction of illustrative examples, that the resulting subsystem dynamical maps can violate fundamental theorems in quantum information theory that hold true for CP maps. Clearly, this has potentially significant implications, depending on the degree to which one is comfortable with accepting the approximation that is invariably made when assuming uncorrelated initial conditions. If this approximation is not made, then as we shall demonstrate, various widely accepted results in the theory of open quantum systems must be revisited.

\textit{Why CP maps?}---%
Completely positive maps are ubiquitous in the theory of open quantum systems \cite{Breuer:book} and in quantum information theory \cite{Nielsen:book,Wilde:book} because the structure of CP maps makes them conducive to the development of a mathematical theory of such systems \cite{Kraus:1971,Gorini:1976uq,Lindblad:1976}, and because the dynamics of many experimentally relevant open quantum systems are well-approximated by 
master equations which integrate to CP maps \cite{Alicki:87}.  The existence of non-CP dynamics is implicit in recent work to measure non-Markovianity of open quantum systems, which is commonly done by quantifying the failure of subsystem dynamics to be completely positive \cite{Breuer:2009, Laine:2010, Breuer:2012, Rodriguez-Rosario:2012, Bylicka:2014}.  However, even in this literature, it is not uncommon to see non-CP maps described as less ``physical'' than CP maps \cite{Wolf:2008}.  We attempt to address this discrimination below, showing that non-CP maps arise within realistic conditions. 
A common argument (e.g., \cite{Lindblad:1976,Preskill:1998,Nielsen:book}) made to justify CP maps is that, independent of any assumptions on the system and bath, the reduced dynamics of the system must be CP because of possible entanglement between the system and a third ``witness'' system which is a closed system with zero Hamiltonian (the witness is ``dead'' \cite{Pechukas:1994}).  The claim is that, since such witnesses may exist, and since the joint system-witness evolution is taken to be $\Psi\otimes\id$, where $\Psi$ is the dynamical map of the system, $\Psi\otimes\id$ must be positive so that any entangled system-witness state is evolved correctly.  Since this must hold for all possible witnesses, $\Psi$ must be CP.  After we have established some basic definitions we will argue that this 
notion of a physical mandate for complete positivity is poorly motivated. To complete the picture, it is necessary to consider the role of non-CP (even non-positive) maps in describing the dynamics of some types of open systems.  Indeed, as we will show, essentially any $\mathbb{C}$-linear, Hermiticity preserving, trace preserving 
map has a role to play in describing subsystem dynamics.

\textit{A formalism for subsystem dynamics}.---%
\label{sec:FormalismSummary}%
We begin by summarizing the formalism which is developed in more detail in \cite{Dominy:2015}.  Fix finite dimensional Hilbert spaces $\HS$ and $\HB$ for the system and bath, respectively.  The standard (Kraus \cite{Kraus:1971}) construction of subsystem dynamical maps can be described by fixing a bath state $\rhoB$ and considering the operator subspace $\CS = \BHS\otimes \rhoB = \{A\otimes\rhoB : A\in\BHS\}\subset \BHSB$, where $\mathcal{B}(\mathcal{H})$ denotes the space of bounded operators acting on the Hilbert space $\mathcal{H}$.  The evolution of an initial state of the subsystem $\rhoS$ under the action of a unitary operator $U\in\USB$ is then uniquely defined by $\Psi_{U}(\rhoS) = \TrB[\Ad_{U}(\rhoS\otimes\rhoB)]$, where $\Ad_{U}(X) \equiv UXU^\dagger$.  It may be seen that this $\Psi_{U}$ is the unique map making the diagram in Figure \ref{fig:KrausCD} commute.  

In order to move from this construction to one in which arbitrary initial system-bath correlations may be modeled, we wish to generalize from spaces of the form $\BHS\otimes\rho_{B}$ to all subspaces $\CS\subset\BHSB$ that give rise to unique subsystem maps $\Psi_{U}^{\CS}:\TrB\CS\mapsto\BHS$.  To that end, let $\mathcal{D}_{\textsc{q}}$ ($\textrm{Q} \in \{\textrm{S},\textrm{B},\textrm{SB}\}$) denote the convex set of all density matrices (i.e., positive, unit trace operators) of system $\textrm{Q}$ and define $U$-consistent subspaces as follows:
\begin{mydefinition}
\label{def:UCS}
For a fixed $U\in\USB$, we say that $\CS\subset\BHSB$ is a \emph{$U$-consistent subspace} if
(i) $\CS$ is a $\mathbb{C}$-linear subspace,
(ii) $\CS$ is spanned by states, i.e., $\Span_{\mathbb{C}}[\DSB\cap\CS]=\CS$, and
(iii) if $X,Y\in\CS$ are such that $\TrB X = \TrB Y$ then $\TrB[\Ad_{U}(X)] = \TrB[\Ad_{U}(Y)]$. 
If $\Sgp\subset\USB$ and $\CS$ is $U$-consistent for all $U\in\Sgp$, then we say that $\CS$ is a $\Sgp$-consistent subspace.
	\end{mydefinition}
This definition comprises essentially a minimal set of assumptions for obtaining a unique $\mathbb{C}$-linear \cite{comment:nonlinear} subsystem dynamical map $\Psi_{U}^{\CS}:\TrB\CS\mapsto\BHS$.  This means that there is a \emph{unique} $\Psi_{U}^{\CS}$ such that $\Psi_{U}^{\CS}[\TrB(A)] =\TrB[\Ad_{U}(A)]$  $\forall A\in\CS$; in other words, there is a unique $\Psi_{U}^{\CS}$ that makes the diagram in Figure~\ref{fig:commutativeDiagram} commute.  By construction, that map $\Psi_{U}^{\CS}$ is $\mathbb{C}$-linear, Hermiticity-preserving, and trace-preserving.  Because of point (ii) in the definition, $\Psi_{U}^{\CS}$ is also uniquely defined by the unitary evolution of the density matrices in $\CS$, i.e., by all of the physically meaningful evolutions allowed by the choices of $U$ and $\CS$.  The assumptions of Definition \ref{def:UCS} are weaker than those implicit in the standard linear assignment map formalism \cite{Pechukas:1994, Alicki:1995, Pechukas:1995}.  However, it should be stressed that, within this more general framework, there is a trade-off between the set of admissible initial states $\CS\cap\DSB$ and the set of unitary operators $U$ for which $\CS$ is $U$-consistent.  Typically, the larger the set $\Sgp\subset\USB$, the more restrictive the conditions for $\Sgp$-consistency, with the linear assignment map case representing the limit $\Sgp = \USB$.  Of course, other choices of the set of initial states may give rise to other scenarios \cite{Dominy:2015}.  In particular, if the set of initial states is not $U$-consistent then there is no corresponding subsystem dynamical map, and if the set of initial states is $U$-consistent, but not convex, the subsystem dynamical map may fail to be linear \cite{Romero:2004}.  Nevertheless, we focus in this paper on the situations giving rise to linear reduced dynamics, in order to better understand these cases.

The greater generality of this approach over that of assignment maps may be useful for modeling the evolution of realistic initial states like families of thermal states, as we demonstrate in the following example that manifestly falls outside the scope of the linear assignment map formalism.

\begin{myexample}
\label{ex:1}
	Consider a one-qubit system in contact with a one-qubit bath, with the following parametrized system-bath Hamiltonian
	\begin{equation}
		H(\theta) = \theta(X+Z)\otimes \identity + X\otimes X,
	\end{equation}
%	where $\theta$ may be thought of as the strength of a magnetic field acting on the system.  
As shown in the Supplementary Material, 
the Gibbs states $\rho = e^{-\beta H(\theta)}/Z_{\textsc{sb}}$ of this family of Hamiltonians span the space of \emph{correlated} initial states
	\begin{equation}
		\CS = \Span_{\mathbb{C}}\{\identity, X\otimes \identity, Z\otimes\identity, \identity\otimes X, X\otimes X, Z\otimes X\},
	\end{equation}
	which is $U$-consistent when, for example, $U$ is generated by the controlled phase operator, i.e., $U = \exp(-it K)$, where $K = (\identity + Z\otimes \identity + \identity\otimes Z - Z\otimes Z)/2$.  The resulting subsystem dynamical map $\Psi_{U}^{\CS}(A) = \cos^{2}(t)A + \sin^{2}(t)(A+ZAZ)/2 + i\sin(t)\cos(t)[A,Z]/2$ is completely positive for all $t\geq 0$.  Indeed, $\Psi_{U}^{\CS}(A) = E_{1}AE_{1}^{\dag} + E_{2}AE_{2}^{\dag}$, where $E_{1} = \sqrt{(1+\cos(t))/2}\big(\cos(t/2)\identity -i\sin(t/2)Z)$, and $E_{2} = \sqrt{(1-\cos(t))/2}\big(\sin(t/2)\identity -i\cos(t/2)Z)$. Thus a family of correlated, thermal initial states, along with a family of unitary transformations, can give rise to CP subsystem dynamics.
\end{myexample}
\begin{figure}
	\subfigure[]{\includegraphics[width=0.54\columnwidth]{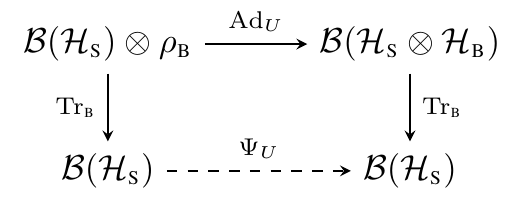}\label{fig:KrausCD}}
	\subfigure[]{\includegraphics[width=0.45\columnwidth]{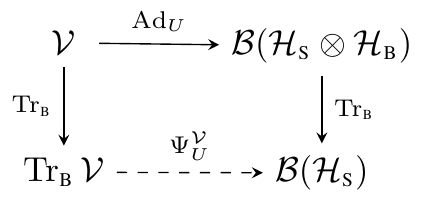}\label{fig:commutativeDiagram}}
	\caption{(a) This commutative diagram uniquely defines the subsystem dynamical map $\Psi_{U}^{\CS}$ arising in the Kraus formulation \cite{Kraus:1971} from the system-bath evolution operator $U\in\USB$. (b) For any unitary evolution operator $U\in\USB$ and any $U$-consistent subspace $\CS\subset\BHSB$, this commutative diagram uniquely defines the $\mathbb{C}$-linear, Hermiticity-preserving, trace-preserving map $\Psi_{U}^{\CS}$ which acts as the time evolution operator for system states in $\TrB(\DSB\cap\CS)\subset\DS$.}
\end{figure}

\textit{Witnessed (Complete) Positivity}.---%
We can now address the supposed physical mandate for complete positivity suggested above.  It is often argued that \cite{Lindblad:1976,Preskill:1998,Nielsen:book}, in order that a subsystem dynamical map $\Psi_{U}^{\CS}:\TrB\CS\mapsto\BHS$ be considered ``physical'', $\Psi_{U}^{\CS}\otimes\id_{\BHW}$ must be positive for all finite dimensional closed ``witness" systems $\HW$, so that if the system is initially correlated with the witness, the evolution $\Psi_{U}^{\CS}\otimes\Ad_{W}$ always yields a state in $\DSW$ for any $W\in\UW$.
% \cite{comment:why-must-evolution-factor}.  
Indeed, if $\Psi_{U}^{\CS}$ is not CP, there exist witnesses $\HW$ and system-witness states $\rho_{\textsc{sw}}\in\DSW\cap[\TrB\CS\otimes\BHW]$ such that $\Psi_{U}^{\CS}\otimes\id(\rho_{\textsc{sw}})\not\geq 0$.  Since this type of evolution is nonsense (a state evolving to a non-state) such non-CP maps are often dismissed as non-physical.

However, note the choice of a $U$-consistent subspace $\CS$ may be thought of as a ``promise'' that the resulting subsystem dynamical map $\Psi_{U}^{\CS}$ will only be applied to evolve the reduced state of the system when the initial system-bath state lies in $\DSB\cap \CS$ \cite{comment:why-in-DSBV}.  
This implies other promises.  First, since the initial system-bath state lies in $\DSB\cap\CS$, it follows that the initial system state lies in the ``physical domain" $\TrB[\DSB\cap\CS]\subset\DS\cap\TrB\CS$.  
Second, as we show in the Supplementary Material, when other witness systems beyond the system and bath are considered, the total system-bath-witness state must in many cases lie in $\DSBW\cap[\CS\otimes\BHW]$ if the system-witness evolution is to be described by $\Psi_{U}^{\CS}\otimes\id_{\BHW}$ (see Supplementary Material for more details).  
As a consequence, the only system-witness states admissible are those in the ``witnessed physical domain'' $\TrB[\DSBW\cap[\CS\otimes\BHW]]$, which is a \emph{subset} of $\DSW\cap[\TrB\CS\otimes\BHW]$.  Since any system-witness state $\rho_{\textsc{sw}}$ in this domain is covered by a system-bath-witness state $\rho_{\textsc{sbw}}\in\DSBW\cap[\CS\otimes\BHW]$, it holds that $\Psi_{U}^{\CS}\otimes\id_{\BHW}(\rho_{\textsc{sw}}) = \TrB[\Ad_{U\otimes\identity}(\rho_{\textsc{sbw}})] \in\DSW$.  
Other initial system-witness states in $\DSW\cap [\TrB\CS\otimes\BHW]$ not belonging to the witnessed physical domain may be mapped by $\Psi_{U}^{\CS}\otimes\id_{\BHW}$ to to non-positive operators (or to states in $\DSW$), \emph{but since these initial system-witness states are never realized if the promise is upheld, the action of $\Psi_{U}^{\CS}\otimes\id_{\BHW}$ on them carries no physical meaning}.  Thus, non-CP maps should in no way be considered ``non-physical'' \cite{Shaji:2005}.

%%%%%%%%%%%%%%%%%%%%%%%%%%%%%%%%%%%%%%%
\textit{Representations of Subsystem Maps}.---%
We proceed to investigate the set of subsystem dynamical maps that can be realized within the $\Sgp$-consistent subspace framework. The following definition associates to a dynamical map a physical process that could have generated it.

\begin{mydefinition}
	Fix a finite-dimensional Hilbert space $\HS$, let $\mathcal{R}\subset \BHS$ be a $\mathbb{C}$-linear subspace spanned by states, and let $\Phi:\mathcal{R}\mapsto\BHS$.  A \emph{subsystem dynamical representation} for $\Phi$ is a triple $(\HB, U, \CS)$ such that $\HB$ is a Hilbert space, $U\in\USB$, $\CS\in\BHSB$ is a $U$-consistent subspace, $\TrB\CS = \mathcal{R}$, and $\Psi_{U}^{\CS} = \Phi$.  We say a map $\Phi:\mathcal{R}\mapsto\BHS$ is \emph{representable as subsystem dynamics} if there exists a subsystem dynamical representation $(\HB,U,\CS)$ for $\Phi$.  The \emph{positive domain}, $\Omega_{\Phi}$, of a map $\Phi:\mathcal{R}\mapsto\BHS$ is the set $\Omega_{\Phi} := \mathcal{R}\cap\DS\cap\Phi^{-1}(\DS) = \{\rho\in\mathcal{R}\cap\DS\;:\;\Phi(\rho) \in\DS\}$ of states in $\mathcal{R}$ that are mapped by $\Phi$ to states.
\end{mydefinition}

Next we give a necessary and sufficient condition for such  physical processes to exist.
% \cite{comment:theorem1}.

\begin{mytheorem}
	\label{thm:representation}
	Let $\mathcal{R}\subset \BHS$ be a $\mathbb{C}$-linear subspace spanned by states.  A map $\Phi:\mathcal{R}\mapsto\BHS$ is representable as subsystem dynamics if and only if it is $\mathbb{C}$-linear, Hermiticity-preserving, trace-preserving, and is such that the positive domain $\Omega_{\Phi}:=\mathcal{R}\cap\DS\cap\Phi^{-1}(\DS)$ spans $\mathcal{R}$.  One representation for such a map is given by $\HB\simeq  \HS$, $\CS=\Span_{\mathbb{C}}\{\rho\otimes\Phi(\rho)\;:\;\rho\in\Omega_{\Phi}\}$, and $U = \textsc{swap}$.  
\end{mytheorem}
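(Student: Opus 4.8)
The plan is to prove the equivalence in two directions, with the ``if'' direction handled by simply verifying that the explicit triple advertised in the statement works. For the ``only if'' direction, the structural conditions on $\Phi$ come essentially for free from the construction of $\Psi_{U}^{\CS}$, so the real content is showing that the positive domain must span $\mathcal{R}$. For the ``if'' direction, I would take $\HB\simeq\HS$, $U=\textsc{swap}$, $\CS=\Span_{\mathbb{C}}\{\rho\otimes\Phi(\rho):\rho\in\Omega_{\Phi}\}$ and check each clause of Definition~\ref{def:UCS}, the identity $\TrB\CS=\mathcal{R}$, and $\Psi_{U}^{\CS}=\Phi$.

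\emph{Necessity.} Suppose $(\HB,U,\CS)$ is a subsystem dynamical representation of $\Phi$. Then $\Phi=\Psi_{U}^{\CS}$ is $\mathbb{C}$-linear, Hermiticity-preserving and trace-preserving by construction, as already observed after Definition~\ref{def:UCS}. Since $\CS$ is $U$-consistent it is spanned by states, $\Span_{\mathbb{C}}[\DSB\cap\CS]=\CS$. For each $\sigma\in\DSB\cap\CS$ the partial trace $\TrB\sigma$ is a density operator on $\HS$, so $\TrB\sigma\in\mathcal{R}\cap\DS$, and $\Phi(\TrB\sigma)=\TrB[\Ad_{U}(\sigma)]$ is again a density operator (unitary conjugation and partial trace preserve positivity and trace), so $\TrB\sigma\in\Omega_{\Phi}$. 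Applying $\TrB$ to the spanning identity then gives $\Span_{\mathbb{C}}\Omega_{\Phi}\supseteq\Span_{\mathbb{C}}\{\TrB\sigma:\sigma\in\DSB\cap\CS\}=\TrB\CS=\mathcal{R}$, and the reverse inclusion is immediate since $\Omega_{\Phi}\subseteq\mathcal{R}$.

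\emph{Sufficiency.} With $\HB\simeq\HS$, $U=\textsc{swap}$, $\CS=\Span_{\mathbb{C}}\{\rho\otimes\Phi(\rho):\rho\in\Omega_{\Phi}\}$: for $\rho\in\Omega_{\Phi}$ both $\rho$ and $\Phi(\rho)$ are density operators, so $\rho\otimes\Phi(\rho)\in\DSB\cap\CS$; hence $\CS$ is a $\mathbb{C}$-linear subspace spanned by states, i.e.\ clauses (i) and (ii) of Definition~\ref{def:UCS} hold. Trace-preservation of $\Phi$ gives $\TrB[\rho\otimes\Phi(\rho)]=\rho$, so $\TrB\CS=\Span_{\mathbb{C}}\Omega_{\Phi}=\mathcal{R}$ by the hypothesis that $\Omega_{\Phi}$ spans $\mathcal{R}$. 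For clause (iii), expand a generic $X\in\CS$ as $X=\sum_{i}c_{i}\,\rho_{i}\otimes\Phi(\rho_{i})$ with $\rho_{i}\in\Omega_{\Phi}$; since $\Ad_{\textsc{swap}}(A\otimes B)=B\otimes A$ and $\Tr\rho_{i}=1$, and using $\mathbb{C}$-linearity of $\Phi$,
\[
\TrB[\Ad_{U}(X)]=\sum_{i}c_{i}\,\Phi(\rho_{i})=\Phi\!\Big(\sum_{i}c_{i}\rho_{i}\Big)=\Phi(\TrB X).
\]
Thus $\TrB[\Ad_{U}(X)]$ depends on $X$ only through $\TrB X$, which is clause (iii), and the same identity reads $\Psi_{U}^{\CS}(\TrB X)=\Phi(\TrB X)$ for all $X\in\CS$, i.e.\ $\Psi_{U}^{\CS}=\Phi$ on $\TrB\CS=\mathcal{R}$.

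The step I expect to require the most care is clause (iii): one must make sure $\TrB[\Ad_{U}(X)]$ is well defined as a function of $\TrB X$ despite the non-uniqueness of the expansion $X=\sum_{i}c_{i}\rho_{i}\otimes\Phi(\rho_{i})$, which is exactly what the displayed computation settles --- the $\textsc{swap}$ brings $\Phi(\rho_{i})$ into the system register and trace-preservation of $\Phi$ collapses the bath factors. It is also worth flagging why one builds $\CS$ from $\Omega_{\Phi}$ rather than all of $\mathcal{R}\cap\DS$: positivity of $\Phi(\rho)$ for $\rho\in\Omega_{\Phi}$ is what makes $\rho\otimes\Phi(\rho)$ a genuine state, and the assumption that $\Omega_{\Phi}$ spans $\mathcal{R}$ is precisely what forces $\TrB\CS=\mathcal{R}$.
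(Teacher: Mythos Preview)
Your proof is correct and follows essentially the same approach as the paper's own proof: the necessity direction notes that the physical domain $\TrB[\DSB\cap\CS]$ both spans $\TrB\CS=\mathcal{R}$ and is contained in $\Omega_{\Phi}$, while the sufficiency direction verifies the explicit $\textsc{swap}$-based representation via the same expansion $X=\sum_{i}c_{i}\rho_{i}\otimes\Phi(\rho_{i})$ and the identity $\TrB[\Ad_{\textsc{swap}}(X)]=\Phi(\TrB X)$. The only cosmetic difference is that the paper phrases clause~(iii) as ``$\TrB X=0\Rightarrow\TrB[\Ad_{U}(X)]=0$'' whereas you establish the equivalent functional dependence $\TrB[\Ad_{U}(X)]=\Phi(\TrB X)$ directly.
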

The proof of necessity is essentially by Definition~\ref{def:UCS}. The proof of sufficiency is a matter of checking the correctness of the given representation and is given in the Supplementary Material, along with the proofs of most of our subsequent results. Theorem \ref{thm:representation} may be thought of as essentially a unitary dilation theorem \cite{Takai:1972}, in some ways similar to those of Stinespring \cite{Stinespring:1955} and Sz.-Nagy \cite{Schaffer:1955}. Those theorems described dilations of (unital or trace-preserving) CP maps on $C^{\ast}$-algebras and of contractions on Hilbert spaces, showing that they can be represented in terms of unitary evolutions on larger spaces. Theorem~\ref{thm:representation} provides an analogous result for a more general class of subsystem maps, although it does not exhibit some of the algebraic and/or spectral features of those dilations.  This is also conceptually related to efforts to embed non-Markovian dynamics within larger systems exhibiting Markovian dynamics \cite{Breuer:1999, Breuer:2004, Budini:2013, Hush:2015}.

It should be stressed that any map $\Phi:\mathcal{R}\mapsto\BHS$ which is subsystem representable admits infinitely many representations with potentially quite diverse properties.  In particular, the use of $\textsc{swap}$ in the proof of Theorem 1 should not be taken to imply that $\textsc{swap}$ plays a central role in the formalism of $\Sgp$-consistent subspaces.  It is used here merely to demonstate one mathematically convenient and simple representation for a representable map.  The set of $\textsc{swap}$-based representations for a representable map is a vanishingly small subset of the class of all representations, and one that has no special physical significance.  The following remark gives a constructive way to describe the particular $\CS$ mentioned in Theorem \ref{thm:representation}.

\begin{myremark}
\label{rem:repStruct}
The $\CS$ described in Theorem \ref{thm:representation} may be thought of as the image through $\id\otimes\Phi$ of the symmetric sector (i.e., the $+1$ eigenspace of the $\Ad_{\textsc{swap}}$ operator) of $\mathcal{R}\otimes\mathcal{R}$, i.e., $\CS = \id\otimes\Phi\big(\Span_{\mathbb{C}}\{A_{i}\otimes A_{j} + A_{j}\otimes A_{i}\;:\; i\leq j\}\big)$ where $\{A_{i}\}$ is any basis for $\mathcal{R}$.  For this representation, the physical domain $\TrB[\CS\cap\DSB]$ is equal to the positive domain $\Omega_{\Phi}$ of $\Phi$.
\end{myremark}

Let us take a closer look at the construction described in Theorem \ref{thm:representation} by applying it to obtain a representation of the transpose map on a single qubit -- a map well-known to be positive, but not completely positive. 
	 Let $\HS\simeq \HB\simeq\mathbb{C}^{2}$, $U=\textsc{swap}$, and
	\begin{equation}
		\CS := \Span_{\mathbb{C}}\{\rho\jmdotimes\rho^{T}\;:\;\rho\in\DS\},
	\end{equation}
yielding $\Psi_{U}^{\CS}(\rho) = \rho^{\rmT}$. Using Remark~\ref{rem:repStruct}, it may be seen that $\CS$ is a $10$-dimensional operator space (a basis for this space is found in the SM).  This example demonstrates the added generality of the $\Sgp$-consistent subspace formalism to that of Pechukas-like assignment maps.  Whereas this construction has produced a not-completely-positive map on the full domain $\BHS$ with physical domain equal to $\DS$, the Pechukas theorem \cite{Pechukas:1994} shows that this cannot be reproduced with any standard linear assignment map $\mathcal{A}:\TrB\CS \mapsto\CS$.

	It may be tempting to view the definition of the $U$-consistent subspace $\CS:=\Span_{\mathbb{C}}\{\rho\otimes\Phi(\rho)\;:\; \rho\in\Omega_{\Phi}\}$ in Theorem \ref{thm:representation} as in some way nonlinear, perhaps invoking the nonlinear map $\rho\mapsto \rho\otimes\Phi(\rho)$, or indeed, the cloning map $\rho\mapsto\rho\otimes\rho$.  However, remember that the specification of $\CS$ is really a ``promise'' that, when the subsystem dynamical map $\Psi_{U}^{\CS}$ will be invoked, the initial system-bath state lies in $\DSB\cap \CS$.  There is no assumption about how the system-bath state came to be in $\DSB\cap\CS$ or, in the case of the construction described in Theorem \ref{thm:representation}, about what the map $\rho\mapsto\rho\otimes\Phi(\rho)$ means operationally.  We seek only to answer the question: if the initial system-bath state is in $\DSB\cap\CS$, how does the reduced state of the system evolve from there?  That question involves only linear maps; in particular, every space and every map in Figure \ref{fig:commutativeDiagram} is $\mathbb{C}$-linear.

Theorem \ref{thm:representation} says that the representability of a map $\Phi$ depends, in part, on the physical domain $\Omega_{\Phi}$ spanning $\mathcal{R}$.   We can weaken this condition by discarding those parts of $\Phi$ that play no role in mapping positive operators to positive operators, i.e., that are irrelevant for describing physical evolutions:
\begin{mycorollary}
	If $\Phi:\BHS\mapsto\BHS$ is $\mathbb{C}$-linear, Hermiticity-preserving, and trace-preserving with non-empty positive domain $\Omega_{\Phi} = \DS\cap\Phi^{-1}(\DS)$, then the restriction $\Phi\big|_{\mathcal{R}}:\mathcal{R}\mapsto\BHS$ is representable as subsystem dynamics, where $\mathcal{R} = \Span_{\mathbb{C}}\Omega_{\Phi}$.  In other words, the physically relevant part of $\Phi$ is representable as subsystem dynamics.
\end{mycorollary}

\textit{Violations of QIT Theorems}.---%
\label{sec:Violations}%
Much of quantum information theory has been developed under the assumption that subsystem dynamical maps are trace-preserving CP (CPTP) maps with domain the full system operator algebra $\BHS$. What happens when this assumption is violated? We construct a simple example of a non-CP dynamical map $\Phi$, representable as subsystem dynamics, that violates some cherished inequalities for CPTP maps that are fundamental to quantum information theory.  

The theorems we will consider -- contractivity of CPTP maps under the trace norm and Uhlmann's theorem of non-increasing relative entropy
%, and the quantum data processing inequality (QDPI) 
-- both serve to characterize the same basic phenomenon: output states of CPTP maps are less distinguishable than the input states.  That these theorems are violable is implicit in the definitions of several measures of non-Markovianity \cite{Breuer:2009, Laine:2010, Bylicka:2014}, however the violations below are based on the explicit unitary evolution of a particular subspace of the system-bath operator algebra so as to have a clear physical basis.  To construct our example, therefore, we must define a dynamical map for which some output states are more distinguishable than the inputs (see also \cite{Breuer:2009,Dajka:2010,Dajka:2012}).  To that end, let the system of interest be a single qubit and consider the $\mathbb{C}$-linear, Hermiticity-preserving, trace-preserving map $\Phi:\BHS\mapsto\BHS$ given by 
\begin{equation}
	\Phi(A) = \frac{1}{\epsilon}A - \frac{1-\epsilon}{2\epsilon}\Tr(A)\identity
\label{eq:repo-map}
\end{equation}
for some small $\epsilon>0$.  This $\Phi$ may be thought of as a ``repolarizer''; it is the inverse of the depolarizing channel $\rho\mapsto \epsilon\rho + (1-\epsilon)\identity/2$.  In contrast to the example of the transpose map which is positive but not CP, this $\Phi$ is not even positive.  The positive domain of $\Phi$ may be seen to be the $\epsilon$-ball 
\begin{equation}
	\Omega_{\Phi} = \left\{\frac{1}{2}\big(\identity + a_{1}X + a_{2}Y + a_{3}Z\big)\;:\; \|\vec{a}\|^{2} \leq \epsilon^{2}\right\}.
\end{equation}

We may construct a subsystem dynamical representation $(\HB, U, \CS)$ as described in Theorem~\ref{thm:representation} as follows. Let $\HB$ be another qubit, $U = \textsc{swap}$, and
$\CS = \Span_{\mathbb{C}}\{\rho\otimes\Phi(\rho)\;:\;\rho\in\Omega_{\Phi}\}$.
Invoking Remark~\ref{rem:repStruct} readily yields the characterization
of $\CS$ as a 10-dimensional subspace of $\BHSB$ (specified in the SM).  It follows from Theorem \ref{thm:representation} that the resulting subsystem dynamical map $\Psi_{U}^{\CS}$ is identical to $\Phi$, and the physical domain is identical to $\Omega_{\Phi}$. \emph{Thus, the repolarizer map is not fiction: it corresponds to a well-defined physical process.}

Alternatively, we may construct a different subsystem dynamical representation for $\Phi$ as follows.  If a map $\Phi:\BHS\mapsto\BHS$ represented by $(\HB,U,\CS)$ is invertible, the inverse map $\Phi^{-1}:\BHS\mapsto\BHS$ may be represented by $\big(\HB, U^{\dag}, \Ad_{U}\CS\big)$.  The physical domain of this new representation is the image $\Phi(\TrB[\CS\cap\DSB])\subset\DS$ of the physical domain $\TrB[\CS\cap\DSB]$ of the representation $(\HB,U,\CS)$ of $\Phi$.  In the present case, we may write the depolarizing channel $\Phi^{-1}$ as 
\begin{align}
	\Phi^{-1}(A) 
	 = \epsilon A + \frac{1-\epsilon}{4}\sum_{i=0}^{3}\sigma_{i}A\sigma_{i}
	= \sum_{i=0}^{3}M_{i}AM_{i}^{\dag},
\end{align}
where $M_{0} = \frac{\sqrt{1+3\epsilon}}{2}\identity$ and $M_{i} = \frac{\sqrt{1-\epsilon}}{2}\sigma_{i}$ for $i=1,2,3$.  Then $\HB \simeq\mathbb{C}^{4}\simeq \Span_{\mathbb{C}}\{\ket{0},\ket{1},\ket{2},\ket{3}\}$, 
\begin{equation}
	\CS = \bigg\{\sum_{i,j}M_{i}AM_{j}^{\dag}\otimes\ketbra{i}{j}\;:\; A\in\BHS\bigg\}
\end{equation}
and $\langle i|U|0\rangle = M_{i}$ (the rest of $U$ is arbitrary up to the constraint $U\in\USB$).

%%%%%

\textit{Contractivity Under Norm Distance}.---%
For any $1\leq p\leq \infty$, let $\delta_{p}$ be the normalized von Neumann-Schatten $p$-norm distance, i.e.,
\begin{equation}
	\delta_{p}(\tau_{1},\tau_{2}) = 2^{-\frac{1}{p}}\|\tau_{1} - \tau_{2}\|_{p} = 2^{-\frac{1}{p}}\big[\Tr(|\tau_{1}-\tau_{2}|^{p})\big]^{\frac{1}{p}}. 
\end{equation}
It is well known (see, e.g., \cite{Nielsen:book}) that any trace preserving CP map on a C$^{*}$ algebra is contractive under the 1-norm (i.e., the trace norm).  In other words, if $\Upsilon$ is a CPTP map with domain $\BHS$, and $\tau_{1}$ and $\tau_{2}$ are states in $\DS$, then $\delta_{1}(\tau_{1},\tau_{2})\geq \delta_{1}[\Upsilon(\tau_{1}),\Upsilon(\tau_{2})]$.  However, for the non-positive repolarizer map $\Phi$ and for any norm $\|\cdot\|$ on $\BHS$, 
\begin{align}
	\|\Phi(\rho_{1}) - \Phi(\rho_{2})\| 
	%& = \left\|\frac{1}{\epsilon}\rho_{1}-\frac{1-\epsilon}{2\epsilon}\identity-\left(\frac{1}{\epsilon}\rho_{1}-\frac{1-\epsilon}{2\epsilon}\identity\right)\right\| \nonumber\\& 
	= \frac{1}{\epsilon}\|\rho_{1}-\rho_{2}\|
\end{align}
for all $\rho_{1},\rho_{2}\in\Omega_{\Phi}$.  So, the subsystem dynamical map $\Psi_{U}^{\CS}$ is not contractive for any norm distance.

%%%%%%%

\textit{Uhlmann's Theorem}.---%
Consider the relative entropy $S(\tau_{1}\|\tau_{2}) := \Tr(\tau_{1}[\log(\tau_{1}) - \log(\tau_{2})])$.  Uhlmann's theorem \cite{Lindblad:1975, Uhlmann:1977} states that the relative entropy is non-increasing for CPTP maps, i.e., $S(\tau_{1}\|\tau_{2}) \geq S[\Phi(\tau_{1})\|\Phi(\tau_{2})]$ for any CPTP map $\Phi$ and any states $\tau_{1},\tau_{2}$.  However, for the repolarizer map the joint convexity of the relative entropy \cite{Lieb:1973, Uhlmann:1977} implies that
\begin{align}
	S(\rho_{1}\|\rho_{2}) &= S\big(\epsilon\Phi(\rho_{1}) + (1-\epsilon)\identity/2 \|\epsilon\Phi(\rho_{2}) + (1-\epsilon)\identity/2\big) \nonumber\\
	& \leq \epsilon S\big(\Phi(\rho_{1})\|\Phi(\rho_{2})\big) + (1-\epsilon)S\big(\identity/2\|\identity/2\big) \nonumber\\
	& = \epsilon S\big(\Phi(\rho_{1})\|\Phi(\rho_{2})\big),
\end{align}
so that $S\big(\Phi(\rho_{1})\|\Phi(\rho_{2})\big)\geq \frac{1}{\epsilon}S(\rho_{1}\|\rho_{2})$ for all $\rho_{1},\rho_{2}\in\Omega_{\Phi}$.
Therefore the repolarizer map fails to be non-increasing for the relative entropy.

%%%%%

\textit{Summary and Open Questions}.---%
\label{sec:Summary}%
We have shown that, far from being limited to CP maps, essentially any $\mathbb{C}$-linear, Hermiticity-preserving
trace-preserving 
map can arise as the unique subsystem dynamical map for some bath $\HB$, unitary propagator $U\in\USB$, and space of (typically correlated) initial states $\CS\subset\BHSB$.    Non-CP maps such as the transpose map and even non-positive maps such as the repolarizer map [Eq.~\eqref{eq:repo-map}], often considered ``nonphysical'' arise easily within the formalism we have presented.  However, their application requires somewhat more care than CP maps if the sequence of ``promises'' suggested by $\CS$ are not enforced by physics (e.g., symmetries, or limiting behavior similar to weak interactions).  Likewise, concatenation of these maps must be handled with care.

The inclusion of non-CP maps in the space of possible subsystem dynamical maps represents a challenge to the standard view.  
We stress that this is 
in our view a matter of necessity if the theory of open quantum systems is to apply beyond the unrealistic assumption of classically correlated initial system-bath states. Basic, fundamental results 
may need to be reformulated and reproved if the extension is to be achieved.  An important part of this work will be to understand under what circumstances, if any, the different behavior of these non-CP maps may be exploited for quantum gains.

\acknowledgments
This research was supported by the ARO MURI grant W911NF-11-1-0268.  The authors thank Iman Marvian for many helpful discussions. 

\bibliography{refs}

\appendix
\section{Calculations in support of Example 1 (joint initial Gibbs state)}
\label{app:Gibbs}
\setcounter{mytheorem}{0}
\setcounter{mycorollary}{0}

Consider a one-qubit system in contact with a one-qubit bath, with the following parametrized system-bath Hamiltonian
\begin{equation}
	H(\theta) = \theta(X+Z)\otimes \identity + X\otimes X.
\end{equation}
This gives rise to the thermal states
\begin{subequations}
\begin{align}
	& \rho(\theta,\beta) = e^{-\beta H(\theta)}/\Tr(e^{-\beta H(\theta)})\\
	& = \frac{1}{4}\Bigg\{\identity - \frac{\theta}{\lambda\gamma}\left[\frac{\gamma\sinh(\beta\lambda) + \lambda\sinh(\beta\gamma)}{\cosh(\beta\lambda) + \cosh(\beta\gamma)}\right]Z\otimes \identity \nonumber\\
		&\   - \frac{1}{\lambda\gamma}\left[\frac{(\theta+1)\gamma\sinh(\beta\lambda) + (\theta-1)\lambda\sinh(\beta\gamma)}{\cosh(\beta\lambda) + \cosh(\beta\gamma)} \right]X\otimes \identity \nonumber\\
		& \  + \left[\frac{\cosh(\beta\lambda) - \cosh(\beta\gamma)}{\cosh(\beta\lambda) + \cosh(\beta\gamma)}\right]\identity\otimes X \nonumber\\
		& \ -  \frac{1}{\lambda\gamma}\left[\frac{(\theta+1)\gamma\sinh(\beta\lambda) - (\theta-1)\lambda\sinh(\beta\gamma)}{\cosh(\beta\lambda) + \cosh(\beta\gamma)} \right]X\otimes X \nonumber\\
		& \ -\frac{\theta}{\lambda\gamma}\left[\frac{\gamma\sinh(\beta\lambda) - \lambda\sinh(\beta\gamma)}{\cosh(\beta\lambda) + \cosh(\beta\gamma)}\right]Z\otimes X\Bigg\},
\end{align}
\end{subequations}
where
\begin{subequations}
\label{eq:lambda-gamma}
\begin{align}
	\lambda & = \sqrt{2\theta^{2} + 2\theta + 1}\\
	\gamma & = \sqrt{2\theta^{2} - 2\theta + 1}.
\end{align}
\end{subequations}

\begin{proof}
To compute these states, note first that
\begin{subequations}
\begin{align}
	H^{2} & = \theta^{2}[X^{2} + \{X,Z\} + Z^{2}]\otimes \identity \nonumber\\
	& \qquad + \theta[2X^{2} + \{X,Z\}]\otimes X + X^{2}\otimes X^{2}\\
	& = (2\theta^{2}+1)\identity + 2\theta (\identity\otimes X).
\end{align}
\end{subequations}
It follows that $H^{2k}$ has the form
\begin{equation}
	H^{2k} = a_{k}\identity + b_{k} \identity\otimes X
\end{equation}
where $a_{k}$ and $b_{k}$ satisfy the recurrence relation
\begin{align}
	\begin{bmatrix}a_{k+1}\\b_{k+1}\end{bmatrix} & = \begin{bmatrix}2\theta^{2}+1  & 2\theta\\2\theta & 2\theta^{2}+1\end{bmatrix}\begin{bmatrix} a_{k}\\ b_{k}\end{bmatrix} & \begin{bmatrix}a_{0}\\ b_{0}\end{bmatrix} & = \begin{bmatrix}1\\ 0\end{bmatrix}
\end{align}
which is readily solved to obtain
\begin{equation}
	\begin{bmatrix}a_{k}\\ b_{k}\end{bmatrix} = \frac{1}{2}\begin{bmatrix}(2\theta^{2} + 2\theta + 1)^{k} + (2\theta^{2} - 2\theta + 1)^{k}\\ (2\theta^{2} + 2\theta + 1)^{k} - (2\theta^{2} - 2\theta + 1)^{k}\end{bmatrix}.
\end{equation}
It is then found that the even and odd powers of the Hamiltonian are given by
\begin{subequations}
\begin{align}
	H^{2k} & = \frac{1}{2}[(2\theta^{2} + 2\theta + 1)^{k} + (2\theta^{2} - 2\theta + 1)^{k}]\identity \nonumber\\
	& \quad + \frac{1}{2} [(2\theta^{2} + 2\theta + 1)^{k} - (2\theta^{2} - 2\theta + 1)^{k}]\identity\otimes X\\
	H^{2k+1} & = \frac{\theta}{2}[(2\theta^{2} + 2\theta + 1)^{k} + (2\theta^{2} - 2\theta + 1)^{k}]Z\otimes \identity \nonumber\\
	& \quad + \frac{1}{2} \Big[(\theta+1)(2\theta^{2} + 2\theta + 1)^{k} \nonumber\\
	& \quad\qquad + (\theta-1)(2\theta^{2} - 2\theta + 1)^{k}\Big]X\otimes \identity\nonumber\\
	& \quad + \frac{1}{2}\Big[(\theta+1)(2\theta^{2} + 2\theta + 1)^{k} \nonumber\\
	& \quad\qquad - (\theta-1)(2\theta^{2} - 2\theta + 1)^{k}\Big]X\otimes X\nonumber\\
	& \quad + \frac{\theta}{2} [(2\theta^{2} + 2\theta + 1)^{k} - (2\theta^{2} - 2\theta + 1)^{k}]Z\otimes X,
\end{align}
\end{subequations}
whence
\begin{align}
	& 2e^{-\beta H(\theta)}\nonumber\\
	& = [\cosh(\beta\lambda) + \cosh(\beta\gamma)]\identity \nonumber\\
	& \quad  + [\cosh(\beta\lambda) - \cosh(\beta\gamma)]\identity\otimes X \nonumber\\
	& \quad  - \left[\theta\frac{\sinh(\beta\lambda)}{\lambda} + \theta\frac{\sinh(\beta\gamma)}{\gamma}\right]Z\otimes \identity \nonumber\\
	& \quad - \left[(\theta+1)\frac{\sinh(\beta\lambda)}{\lambda} + (\theta-1)\frac{\sinh(\beta\gamma)}{\gamma}\right]X\otimes \identity \nonumber\\
	& \quad - \left[(\theta+1)\frac{\sinh(\beta\lambda)}{\lambda} - (\theta-1)\frac{\sinh(\beta\gamma)}{\gamma}\right]X\otimes X \nonumber\\
	& \quad - \left[\theta\frac{\sinh(\beta\lambda)}{\lambda} - \theta\frac{\sinh(\beta\gamma)}{\gamma}\right]Z\otimes X,
\end{align}
where $\lambda$ and $\gamma$ are given in Eq.~\eqref{eq:lambda-gamma}.
\end{proof}

Since the coefficient functions are linearly independent, we find, then, that
\begin{subequations}
\begin{align}
	\CS & := \Span_{\mathbb{C}}\{\rho(\theta,\beta)\;:\;\theta\in\mathbb{R}\}\nonumber\\
	& = \Span_{\mathbb{C}}\{\identity,\identity\otimes X, Z\otimes\identity, X\otimes\identity, X\otimes X, Z\otimes X\} \nonumber\\
	& = \Span_{\mathbb{C}}\{\identity,X,Z\}\otimes\Span_{\mathbb{C}}\{\identity,X\}\\
	\TrB\CS & = \Span_{\mathbb{C}}\{\identity,X, Z\}\\
	\CS_{0} & = \Span_{\mathbb{C}}\{\identity\otimes X, X\otimes X, Z\otimes X\},
\end{align}
\end{subequations}
so that the subsystem dynamical map $\Psi_{U}^{\CS}$ exists for all $U\in\USB$ such that $\TrB[\Ad_{U}(\identity\otimes X)] = \TrB[\Ad_{U}(X\otimes X)] = \TrB[\Ad_{U}(Z\otimes X)] = 0$.

Now take $U = \exp(-itK)=\cos(t)\identity - i\sin(t)K$, where $K = \frac{1}{2}(\identity + \sigma\otimes\identity + \identity\otimes \tau - \sigma\otimes\tau)$, and where $\sigma$ and $\tau$ are Pauli operators, so that $K$ can be thought of as a generalized controlled-Pauli operator.  Suppose further that $\{\tau,X\} = 0$.  Then 
\begin{align}
	\Ad_{U}(\Gamma\otimes X)&=\cos^{2}(t)\Gamma\otimes X + \frac{1}{4}\sin^{2}(t)\big(2\{\sigma,\Gamma\}\otimes X\nonumber\\
	& \quad  + [\Gamma,\sigma]\otimes[\tau,X]\big)\nonumber\\
	&  +\frac{i}{4}\sin(t)\cos(t)\big(2[\Gamma,\sigma]\otimes X + 2\Gamma\otimes[X,\tau]\nonumber\\
	&  \quad - \{\Gamma,\sigma\}\otimes[X,\tau]\big)
\end{align}
so that $\TrB[\Ad_{U}(\Gamma\otimes X)] = 0$ for all $\Gamma = \identity, X, Z$.  So $\CS$ is $U$-consistent with respect to these $U$ operators.  Then
\begin{align}
	\Ad_{U}(\Gamma\otimes \identity)&=\cos^{2}(t)\Gamma\otimes\identity + \frac{\sin^{2}(t)}{2}\big((\Gamma+\sigma\Gamma\sigma)\otimes\identity\nonumber\\
	& \quad + (\Gamma-\sigma\Gamma\sigma)\otimes\tau\big) \nonumber\\
	&+ \frac{i}{2}\sin(t)\cos(t)[\Gamma,\sigma]\otimes(\identity-\tau),
\end{align}
so that
\begin{align}
	\Psi_{U}^{\CS}(A) &= \TrB[\Ad_{U}(A\otimes\identity/2)] \nonumber\\
	&=\cos^{2}(t)A + \frac{\sin^{2}(t)}{2}\big(A+\sigma A\sigma\big)\nonumber\\
	&+ \frac{i}{2}\sin(t)\cos(t)[A,\sigma].
\end{align}
for all $A\in\Span_{\mathbb{C}}(\identity, X, Z)$.

\section{Maximal $\Sgp$-consistent subspaces}%
\label{sec:maxSubspaces}%
Here we show that the witness consistent subspace $\CS\otimes\BHW$ are not simply a convenient choice, but in many cases these subspaces are the only choice for initial system-bath-witness states if the system-witness evolution is to be described by $\Psi_{U}^{\CS}\otimes\id$.  However, the assumption that system-witness states evolve as $\Psi_{U}^{\CS}\otimes\Ad_{W}$ (rather than some joint system-witness evolution that doesn't factor as a tensor product of maps) and the restrictions that this places on the admissible system-bath-witness states should not be taken lightly. The fact that complete positivity implicitly imposes these restrictions should raise considerable doubt as to the physical significance of this property.

Once a subsystem dynamical map has been defined it is natural to extend its domain to be as large as possible. Formally:
\begin{mydefinition}
	Fix $\HS$, $\HB$, and $\Sgp\subset\USB$.  A $\Sgp$-consistent subspace $\CS\subset\BHSB$ will be called \emph{maximal} if for every $\rhoSB\in\DSB\setminus\CS$, the subspace $\CS+\mathbb{C}\rhoSB$ is not $\Sgp$-consistent.
\end{mydefinition}
Clearly, every $\Sgp$-consistent subspace $\CS$ may be extended to a maximal subspace, and a maximal $\Sgp$-consistent subspace $\CS$ is such that $\TrB\CS = \BHS$, i.e., \emph{all} system states are admissible.  Indeed, if $\CS$ is $\Sgp$-consistent and $\TrB\CS\neq\BHS$, then there exists $\rhoSB\in\DSB$ such that $\rhoS=\TrB\rhoSB\notin\TrB\CS$ and $\CS_{1}:=\CS+\mathbb{C}\rhoSB$ is $\Sgp$-consistent since $\ker(\TrB|_{\CS_{1}}) = \ker(\TrB|_{\CS})$, i.e. if $A,B\in\CS_{1}$ and $\TrB A = \TrB B$, then $A-B\in\CS$ so that $\TrB(\Ad_{U}(A-B)) = 0$ for all $U\in\Sgp$.  Moreover, for a maximal $\Sgp$-consistent subspace $\CS$, $\DSB\cap\CS$ is the set of all system-bath states $\rhoSB$ which transform as $\Psi_{U}^{\CS}(\TrB\rhoSB) = \TrB(\Ad_{U}\rhoSB)$ $\forall U\in\Sgp$.  In other words, the maps $\{\Psi_{U}^{\CS}\;:\;U\in\Sgp\}$ define the subspace $\CS$.  

To help us to further understand the structure of maximal subspaces consider the $\Sgp$-consistent subspace of $\ker\TrB$:
\begin{mydefinition}
	Fix $\HS$, $\HB$, and $\Sgp\subset\USB$.  The $\Sgp$-consistent subspace of $\ker\TrB\subset\BHSB$ is 
	\begin{equation}
		\hat{\CS}_{0}^{\Sgp} := \bigcap_{U\in\{\identity\}\cup\Sgp}\Ad_{U^{\dag}}\ker\TrB.
	\end{equation}
\end{mydefinition}
\begin{mylemma}
	\label{lem:transformationSpace}
	If $\CS$ is a maximal $\Sgp$-consistent subspace, then the set of all operators $A\in\BHSB$ for which $\TrB[\Ad_{U}(A)] = \Psi_{U}^{\CS}[\TrB(A)]$ for all $U\in\Sgp$ is given by $\CS' := \CS + \hat{\CS}_{0}^{\Sgp}$.
\end{mylemma}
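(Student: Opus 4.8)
The plan is to prove the set equality $\mathcal{T}=\CS'$ by establishing the two inclusions, where $\mathcal{T}$ denotes the set of all $A\in\BHSB$ satisfying $\TrB[\Ad_{U}(A)] = \Psi_{U}^{\CS}[\TrB(A)]$ for every $U\in\Sgp$. The one genuinely substantive ingredient is already in hand: the paragraph preceding the lemma shows that a maximal $\Sgp$-consistent subspace $\CS$ has $\TrB\CS=\BHS$, so $\Psi_{U}^{\CS}$ is a well-defined $\mathbb{C}$-linear map on all of $\BHS$ for each $U\in\Sgp$, and in particular $\Psi_{U}^{\CS}[\TrB(A)]$ is meaningful for any $A\in\BHSB$. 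I will also use two elementary reformulations of the definitions: (a) by $\Sgp$-consistency and the well-definedness clause of Definition~\ref{def:UCS}, $\Psi_{U}^{\CS}[\TrB(C)] = \TrB[\Ad_{U}(C)]$ for all $C\in\CS$ and all $U\in\Sgp$; and (b) $N\in\hat{\CS}_{0}^{\Sgp}$ if and only if $\TrB(N)=0$ (the $U=\identity$ factor of the intersection) together with $\TrB[\Ad_{U}(N)]=0$ for all $U\in\Sgp$ (the remaining factors, since $N\in\Ad_{U^{\dag}}\ker\TrB$ is equivalent to $\Ad_{U}(N)\in\ker\TrB$).

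For the inclusion $\CS'\subseteq\mathcal{T}$, I would write $A=C+N$ with $C\in\CS$ and $N\in\hat{\CS}_{0}^{\Sgp}$ and compute directly: $\TrB(A)=\TrB(C)$ because $\TrB(N)=0$, and for each $U\in\Sgp$, using linearity of $\TrB\circ\Ad_{U}$ together with reformulations (a) and (b),
\[
	\TrB[\Ad_{U}(A)] = \TrB[\Ad_{U}(C)] + \TrB[\Ad_{U}(N)] = \Psi_{U}^{\CS}[\TrB(C)] + 0 = \Psi_{U}^{\CS}[\TrB(A)],
\]
so $A\in\mathcal{T}$. For the reverse inclusion $\mathcal{T}\subseteq\CS'$, I would start from $A\in\mathcal{T}$, use $\TrB\CS=\BHS$ to choose $C\in\CS$ with $\TrB(C)=\TrB(A)$, and set $N:=A-C$. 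Then $\TrB(N)=0$, and for each $U\in\Sgp$,
\[
	\TrB[\Ad_{U}(N)] = \TrB[\Ad_{U}(A)] - \TrB[\Ad_{U}(C)] = \Psi_{U}^{\CS}[\TrB(A)] - \Psi_{U}^{\CS}[\TrB(C)] = 0,
\]
where the middle equality uses $A\in\mathcal{T}$ and reformulation (a), and the last uses $\TrB(A)=\TrB(C)$. By reformulation (b), $N\in\hat{\CS}_{0}^{\Sgp}$, hence $A=C+N\in\CS+\hat{\CS}_{0}^{\Sgp}=\CS'$.

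There is essentially no hard part here: both inclusions are one-line linear-algebra verifications once the definitions are unpacked. The only point requiring care is ensuring that $\Psi_{U}^{\CS}$ is genuinely a function whose domain contains $\TrB(A)$ for the operators in play — which is exactly what maximality of $\CS$ (via $\TrB\CS=\BHS$) and $\Sgp$-consistency (via the well-definedness clause of Definition~\ref{def:UCS}) provide. If maximality were dropped, the statement would instead have to be restricted to operators $A$ with $\TrB(A)\in\TrB\CS$, and the same argument would go through verbatim on that restricted domain.
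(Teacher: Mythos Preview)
Your proof is correct and follows essentially the same route as the paper's: for $\mathcal{T}\subseteq\CS'$ you pick $C\in\CS$ with $\TrB(C)=\TrB(A)$ and show $A-C\in\hat{\CS}_{0}^{\Sgp}$, exactly as the paper does (with $A'$ in place of $C$), and the reverse inclusion the paper dismisses as ``trivial'' is precisely the computation you spell out. The only cosmetic difference is that the paper argues $\TrB(A)\in\TrB\CS$ from the requirement that $\Psi_{U}^{\CS}[\TrB(A)]$ be defined, whereas you invoke maximality to get $\TrB\CS=\BHS$ outright; both are valid here.
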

\begin{proof}
	Since the domain of $\Psi_{U}^{\CS}$ is $\TrB\CS$, if $A\in\BHSB$ transforms as $\TrB[\Ad_{U}(A)] = \Psi_{U}^{\CS}[\TrB(A)]$, then necessarily, $\TrB(A)\in\TrB\CS$.  It follows that there exists $A'\in\CS$ such that $\TrB(A) = \TrB(A')$.  Then $0 = \Psi_{U}^{\CS}[\TrB(A-A')] = \TrB[\Ad_{U}(A-A')]$ for all $U\in\Sgp$, so that $A-A'\in\hat{\CS}_{0}^{\Sgp}$, and therefore $A\in\CS' := \CS + \hat{\CS}_{0}^{\Sgp}$.  Then all operators transforming as required lie in $\CS'$.  That all operators in $\CS'$ transform as required is trivial.
\end{proof}

As the following Lemma demonstrates, the promise that $\hat{\CS}_{0}^{\Sgp}\subset\CS$ considerably simplifies the theory:
	
\begin{mylemma}
\label{lem1}
	If $\CS$ is a maximal $\Sgp$-consistent subspace and $\hat{\CS}_{0}^{\Sgp}\subset\CS$, then for any witness $\HW$, $\CS\otimes\BHW$ is a maximal $\Sgp\otimes\identity$-consistent subspace.  In other words, $\DSBW\cap[\CS\otimes\BHW]$ is the set of all states $\rhoSBW$ in $\DSBW$ that transform as $\TrB[\Ad_{U\otimes\identity}(\rhoSBW)] = \Psi_{U}^{\CS}\otimes\id[\TrB(\rhoSBW)]$ for all $U\in\Sgp$.
\end{mylemma}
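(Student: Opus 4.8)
The plan is to argue in three stages: (1) verify that $\CS\otimes\BHW$ is $\Sgp\otimes\identity$-consistent and that its reduced map is the product map $\Psi_U^\CS\otimes\id$; (2) compute the full ``transformation space'' of $\CS\otimes\BHW$ and show it is $\CS\otimes\BHW$ itself; (3) deduce both maximality and the ``in other words'' characterization. For stage (1) I would check the clauses of Definition~\ref{def:UCS}. Clause (i) is immediate; for (ii), since $\CS$ is spanned by states in $\DSB$ and $\BHW$ is spanned by its density matrices, the product states span $\CS\otimes\BHW$. For (iii), fix a basis $\{F_\alpha\}$ of $\BHW$, so that every element of $\CS\otimes\BHW$ is uniquely $\sum_\alpha A_\alpha\otimes F_\alpha$ with $A_\alpha\in\CS$; if two such elements have equal $\TrB$, their difference $Z=\sum_\alpha C_\alpha\otimes F_\alpha$ satisfies $\TrB C_\alpha=0$ for each $\alpha$ (by linear independence of the $F_\alpha$), hence, since $0\in\CS$ and $\CS$ is $\Sgp$-consistent, $\TrB[\Ad_U C_\alpha]=\TrB[\Ad_U 0]=0$, so $\TrB[\Ad_{U\otimes\identity}Z]=\sum_\alpha\TrB[\Ad_U C_\alpha]\otimes F_\alpha=0$. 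The same termwise computation, with the uniqueness clause of Definition~\ref{def:UCS}, gives $\Psi_{U\otimes\identity}^{\CS\otimes\BHW}=\Psi_U^\CS\otimes\id$; and since $\CS$ is maximal, $\TrB\CS=\BHS$, so $\TrB(\CS\otimes\BHW)=\BHSW$ and this map is defined on all of $\BHSW$.

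For stage (2) I would consider the transformation space $\CS'':=\{A\in\BHSBW : \TrB[\Ad_{U\otimes\identity}A]=(\Psi_U^\CS\otimes\id)[\TrB A]\ \text{for all}\ U\in\Sgp\}$. Writing $A=\sum_\alpha A_\alpha\otimes F_\alpha$ with $A_\alpha\in\BHSB$, and using $\TrB\CS=\BHS$ so that $\Psi_U^\CS$ is defined on every $\TrB A_\alpha$, the defining condition splits (by linear independence of $\{F_\alpha\}$) into $\TrB[\Ad_U A_\alpha]=\Psi_U^\CS[\TrB A_\alpha]$ for each $\alpha$ and each $U\in\Sgp$, i.e. each $A_\alpha$ lies in the transformation space of $\CS$. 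By Lemma~\ref{lem:transformationSpace} that space is $\CS+\hat{\CS}_0^{\Sgp}$, which equals $\CS$ under the hypothesis $\hat{\CS}_0^{\Sgp}\subset\CS$. Hence $A_\alpha\in\CS$ for all $\alpha$ and $\CS''\subset\CS\otimes\BHW$; the reverse inclusion is the termwise computation from stage (1), so $\CS''=\CS\otimes\BHW$.

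For stage (3), intersecting $\CS''=\CS\otimes\BHW$ with $\DSBW$ immediately yields the ``in other words'' statement. For maximality, take $\rhoSBW\in\DSBW\setminus[\CS\otimes\BHW]$, so $\rhoSBW\notin\CS''$; if $[\CS\otimes\BHW]+\mathbb{C}\rhoSBW$ were $\Sgp\otimes\identity$-consistent, its reduced map would be defined on $\TrB([\CS\otimes\BHW]+\mathbb{C}\rhoSBW)=\BHSW$ and would agree with $\TrB\circ\Ad_{U\otimes\identity}$ on $\CS\otimes\BHW$, hence by the uniqueness clause of Definition~\ref{def:UCS} it would equal $\Psi_U^\CS\otimes\id$; applying it to $\rhoSBW$ would then put $\rhoSBW$ in $\CS''$, a contradiction. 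Therefore $[\CS\otimes\BHW]+\mathbb{C}\rhoSBW$ is not $\Sgp\otimes\identity$-consistent and $\CS\otimes\BHW$ is maximal.

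The step I expect to be the main obstacle is stage (2): reducing, via a fixed basis of the witness algebra $\BHW$, the witnessed transformation condition to the bare one for $\CS$, and in particular making sure the componentwise decomposition is legitimate — this requires $\Psi_U^\CS$ to have full domain $\BHS$, i.e. the maximality of $\CS$ — so that Lemma~\ref{lem:transformationSpace} and the hypothesis $\hat{\CS}_0^{\Sgp}\subset\CS$ can be invoked coordinate by coordinate. The remaining steps are routine bookkeeping with tensor factors and the uniqueness clause of Definition~\ref{def:UCS}.
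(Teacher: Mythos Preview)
Your proposal is correct and follows essentially the same approach as the paper: both arguments decompose elements of $\BHSBW$ over a linearly independent family in $\BHW$, reduce the witnessed transformation condition componentwise to the bare one on $\BHSB$, and then invoke Lemma~\ref{lem:transformationSpace} together with the hypothesis $\hat{\CS}_{0}^{\Sgp}\subset\CS$ to conclude that each system-bath component lies in $\CS$. Your stage~(3) makes explicit the passage from ``every correctly-transforming state already lies in $\CS\otimes\BHW$'' to maximality in the sense of the definition, a step the paper handles more tersely; otherwise the two proofs are the same.
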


\begin{proof}
	Suppose $\rhoSBW\in\DSBW$ is such that $\TrB[\Ad_{U\otimes\identity}(\rhoSBW)] = \Psi_{U}^{\CS}\otimes\id[\TrB(\rhoSBW)]$ for all $U\in\Sgp$.  Write $\rhoSBW = \sum_{i}A_{i}\otimes W_{i}$, where $\{A_{i}\}\subset\BHSB$ and where $\{W_{i}\}\subset \BHW$ are linearly independent.  Then the condition on $\rhoSBW$ is that
	\begin{equation}
		\sum_{i}\TrB[\Ad_{U}(A_{i})]\otimes W_{i} = \sum_{i}\Psi_{U}^{\CS}[\TrB(A_{i})]\otimes W_{i}.
	\end{equation}
	By the linear independence of the $\{W_{i}\}$, this implies that $\TrB[\Ad_{U}(A_{i})] = \Psi_{U}^{\CS}[\TrB(A_{i})]$ for all $i$ and for all $U\in\Sgp$.  Then by Lemma \ref{lem:transformationSpace}, $\{A_{i}\}\subset\CS' = \CS + \hat{\CS}_{0}^{\Sgp}$, so that $\rhoSBW\in\CS'\otimes\BHW$.  Since $\hat{\CS}_{0}^{\Sgp}\subset\CS$ by assumption, it follows that $\CS' = \CS$, so that $\rhoSBW\in\CS\otimes\BHW$.  Thus $\CS\otimes\BHW$ is maximal.  And $\CS\otimes\BHW$ is a $\Sgp$-consistent subspace because it is spanned by states and for any $A\in\CS\otimes\BHW$ such that $\TrB(A) = 0$, we can again write $A = \sum_{i}A_{i}\otimes W_{i}$ with $\{A_{i}\}\subset\CS$ and the $\{W_{i}\}\subset\BHW$ linearly independent, so that $\TrB(A) = 0$ implies $\TrB(A_{i}) = 0$ for all $i$.  Then $\TrB(\Ad_{U\otimes\identity}(A)) = \sum \TrB(\Ad_{U}(A_{i}))\otimes W_{i} = \sum 0\otimes W_{i} = 0$ since $\{A_{i}\}\subset\CS$ and $\CS$ is $U$-consistent for all $U\in\Sgp$.  Therefore $\CS\otimes\BHW$ is a maximal $\Sgp\otimes\identity$-consistent subspace.
	%may be defined as the set of all operators $A$ in $\BHSBW$ that transform as $\TrB[\Ad_{U\otimes\identity}(A)] = \Psi_{U\otimes\identity}^{\CS}\otimes\id[\TrB(A)]$ for all $\{\Psi_{U\otimes\identity}^{\CS\otimes\BHW}\;:\; U\otimes\identity\in\Sgp\otimes\identity\}$ where $\Psi_{U\otimes\identity}^{\CS\otimes\BHW}$ is taken to be $\Psi_{U}^{\CS}\otimes\id$.
\end{proof}

Fortunately, in many cases, a maximal $\Sgp$-consistent subspace $\CS$ will indeed contain $\hat{\CS}_{0}^{\Sgp}$, e.g., when $\CS$ contains an interior state (i.e., a state with strictly positive eigenvalues), or when $\Sgp = \USB$.  In these cases $\CS' = \CS$. 

Using these tools, let us now provide an example which demonstrates---by exploiting an initial correlation between the bath and the witness---that even though the system-bath and witness subsystems evolve independently, 
the joint evolution of the system-witness is \emph{not necessarily} given by $\Psi_{U}^{\CS}\otimes\id$. 
\begin{myexample}
	Consider a Kraus subspace $\CS = \BHS\otimes\rhoB$ for some fixed  state $\rhoB\in\DB$, where $\HS\simeq \HB$.  This $\CS$ is a maximal $\USB$-consistent subspace and $\hat{\CS}_{0}^{\USB} = \{0\}\subset\CS$.  Let $\HW$ be a witness and consider a system-bath-witness state $\rhoSBW = \rhoS\otimes\rhoBW$ where $\rhoBW\subset\DBW$ is a correlated state (thus $\rhoSBW \not\in \CS \otimes \BHW$) and $\TrW\rhoBW = \rhoB$, the same bath state $\rhoB$ that defines $\CS$.  Now, let $U\in\USB$ be the $\textsc{swap}$ operator, i.e., $U\ket{\psi}\otimes\ket{\phi} = \ket{\phi}\otimes\ket{\psi}$ for all $\ket{\psi},\ket{\phi}\in\HS\simeq\HB$, so that $\Psi_{U}^{\CS}(\rhoS) = \rhoB$ for all $\rhoS\in\DS$.  Then $\rhoSBW\notin \CS\otimes\BHW$, but $\TrW \rhoSBW = \rhoS\otimes\rhoB \in\CS$, so that the evolution on the system alone is given by $\Psi_{U}^{\CS}(\Tr_{\textsc{bw}}\rhoSBW)$.  If we evolve this state $\rhoSBW$ by $U\otimes\identity$ the states of the system and bath are swapped; if we then trace over the bath we obtain an evolved system-witness state $\TrB[\Ad_{U\otimes\identity}(\rhoSBW)]  
= \rhoBW$.  However, if we trace over the bath first and then apply $\Psi_{U}^{\CS}\otimes\id$, we obtain $\Psi_{U}\otimes\id[\TrB(\rhoSBW)] = \Psi_{U}^{\CS}\otimes\id[\rhoS\otimes\rhoW] = \rhoB\otimes\rhoW$, where $\rhoW = \TrB[\rhoBW]$.  So, even though the system-bath and witness subsystems evolve independently -- each as closed systems via $\Ad_{U}$ on the system-bath and $\id$ on the witness -- the joint evolution of the system-witness is \emph{not necessarily} given by $\Psi_{U}^{\CS}\otimes\id$.
\end{myexample}

\section{Proofs}

%

%\subsection{Proof of Theorem~\ref{thm:representation}}
\subsection{Proof of Theorem~1}
\begin{mytheorem}
%	\label{thm:representation}
	Let $\mathcal{R}\subset \BHS$ be a $\mathbb{C}$-linear subspace spanned by states.  A map $\Phi:\mathcal{R}\mapsto\BHS$ is representable as subsystem dynamics if and only if it is $\mathbb{C}$-linear, Hermiticity-preserving, trace-preserving, and is such that the positive domain $\Omega_{\Phi}:=\mathcal{R}\cap\DS\cap\Phi^{-1}(\DS)$ spans $\mathcal{R}$.  One representation for such a map is given by $\HB\simeq  \HS$, $\CS=\Span_{\mathbb{C}}\{\rho\otimes\Phi(\rho)\;:\;\rho\in\Omega_{\Phi}\}$, and $U = \textsc{swap}$.  
\end{mytheorem}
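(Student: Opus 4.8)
The plan is to prove the two implications separately: necessity is nearly immediate from Definition~\ref{def:UCS}, while sufficiency reduces to verifying that the displayed triple $(\HS,\textsc{swap},\CS)$ satisfies every requirement of a subsystem dynamical representation.

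\emph{Necessity.} Given a representation $(\HB,U,\CS)$ of $\Phi$, the properties that $\Phi=\Psi_{U}^{\CS}$ is $\mathbb{C}$-linear, Hermiticity-preserving and trace-preserving are built into the construction recorded after Definition~\ref{def:UCS}. For the spanning condition I would take an arbitrary $\rho\in\DSB\cap\CS$; then $\TrB\rho\in\DS\cap\TrB\CS=\DS\cap\mathcal{R}$, and $\Phi(\TrB\rho)=\TrB[\Ad_{U}\rho]\in\DS$ because unitary conjugation and partial trace both map states to states, so $\TrB\rho\in\Omega_{\Phi}$. Since $\CS=\Span_{\mathbb{C}}[\DSB\cap\CS]$ by condition (ii), applying the linear map $\TrB$ yields $\mathcal{R}=\TrB\CS=\Span_{\mathbb{C}}[\TrB(\DSB\cap\CS)]\subseteq\Span_{\mathbb{C}}\Omega_{\Phi}\subseteq\mathcal{R}$, so $\Omega_{\Phi}$ spans $\mathcal{R}$.

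\emph{Sufficiency.} For the converse I would take $\HB\simeq\HS$, $U=\textsc{swap}$, and $\CS=\Span_{\mathbb{C}}\{\rho\otimes\Phi(\rho):\rho\in\Omega_{\Phi}\}$, then check conditions (i)--(iii) of Definition~\ref{def:UCS} together with $\TrB\CS=\mathcal{R}$ and $\Psi_{\textsc{swap}}^{\CS}=\Phi$. Condition (i) is clear; condition (ii) holds because for $\rho\in\Omega_{\Phi}$ both $\rho$ and $\Phi(\rho)$ are density matrices, so $\rho\otimes\Phi(\rho)$ is a state lying in $\CS$. The heart of the argument is the identity $\TrS X=\Phi(\TrB X)$ for all $X\in\CS$, where $\TrS$ denotes the partial trace over the first (system) factor: expanding $X=\sum_{k}c_{k}\,\rho_{k}\otimes\Phi(\rho_{k})$ with $\rho_{k}\in\Omega_{\Phi}$ and using $\Tr\Phi(\rho_{k})=\Tr\rho_{k}=1$ reduces $\TrB X$ to $\sum_{k}c_{k}\rho_{k}\in\mathcal{R}$ and $\TrS X$ to $\sum_{k}c_{k}\Phi(\rho_{k})$, which equals $\Phi(\sum_{k}c_{k}\rho_{k})$ by $\mathbb{C}$-linearity of $\Phi$. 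Under the identification $\HB\simeq\HS$ one has $\TrB\circ\Ad_{\textsc{swap}}=\TrS$ on $\BHSB$, so this identity immediately gives condition (iii) (if $\TrB X=\TrB Y$ then $\TrB[\Ad_{\textsc{swap}}X]=\Phi(\TrB X)=\Phi(\TrB Y)=\TrB[\Ad_{\textsc{swap}}Y]$), gives $\TrB\CS=\Span_{\mathbb{C}}\Omega_{\Phi}=\mathcal{R}$, and forces $\Psi_{\textsc{swap}}^{\CS}(\TrB X)=\TrB[\Ad_{\textsc{swap}}X]=\Phi(\TrB X)$ for every $X\in\CS$, i.e.\ $\Psi_{\textsc{swap}}^{\CS}=\Phi$ on $\mathcal{R}=\TrB\CS$. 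Thus $(\HS,\textsc{swap},\CS)$ is a subsystem dynamical representation of $\Phi$.

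\emph{Main obstacle.} There is no deep difficulty; the one point that needs care is the identity $\TrS X=\Phi(\TrB X)$ on $\CS$ --- the assertion that the ``output side'' of any element of $\CS$ is the fixed linear function $\Phi$ of its ``input side''. This is precisely where trace-preservation and linearity of $\Phi$ are used, and it is why ``$\Omega_{\Phi}$ spans $\mathcal{R}$'' is both necessary and sufficient. The remaining points --- the $\HB\simeq\HS$ bookkeeping that turns $\TrB\circ\Ad_{\textsc{swap}}$ into $\TrS$, and the non-uniqueness of the expansion of $X$ in the generators $\rho\otimes\Phi(\rho)$ (harmless, since $\TrB$, $\TrS$ and $\Phi$ are well-defined linear maps) --- are routine.
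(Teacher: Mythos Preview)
Your proof is correct and follows essentially the same route as the paper's own proof. Your necessity argument spells out in detail what the paper dismisses as ``straightforward'' (namely, that $\TrB[\DSB\cap\CS]\subset\Omega_{\Phi}$ and hence the span of $\Omega_{\Phi}$ is all of $\mathcal{R}$), and your sufficiency argument is the same computation as the paper's, neatly packaged via the identity $\TrS X=\Phi(\TrB X)$ on $\CS$ rather than checked piecewise; in both cases the substance---expanding in generators $\rho\otimes\Phi(\rho)$, using trace-preservation to collapse the partial traces, and invoking linearity of $\Phi$---is identical.
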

\begin{proof}
	As mentioned in the main text, any subsystem dynamical map $\Psi_{U}^{\CS}$ must be $\mathbb{C}$-linear,  Hermiticity-preserving, and trace-preserving.  It is straightforward to see that the physical domain $\TrB[\CS\cap\DSB]$ must span $\TrB\CS$ and also must lie in the positive domain of $\Psi_{U}^{\CS}$ so that the positive domain $\Omega_{\Phi_{U}^{\CS}}$ of $\Psi_{U}^{\CS}$ spans $\TrB\CS$.  Thus ``only if'' is proved.  

	Now, suppose $\Phi:\mathcal{R}\mapsto\BHS$ is $\mathbb{C}$-linear, Hermiticity-preserving, and trace-preserving, and that $\Omega_{\Phi}$ spans $\mathcal{R}$.  Then let $\HB\simeq \HS$, $U=\textsc{swap}$, and $\CS = \Span_{\mathbb{C}}\{\rho\otimes\Phi(\rho)\;:\;\rho\in\Omega_{\Phi}\}$.  It is clear from construction that $\CS$ is $\mathbb{C}$-linear and spanned by states.  If $X\in\CS$, then $X = \sum_{i}\alpha_{i}\rho_{i}\otimes\Phi(\rho_{i})$ for some $\{\alpha_{i}\}\in\mathbb{C}$ and $\{\rho_{i}\}\in\Omega_{\Phi}$.  If also $\TrB X = 0$, then $\sum_{i}\alpha_{i}\rho_{i} = 0$.  But then $\TrB (UXU^{\dag}) = \sum_{i}\alpha_{i}\Phi(\rho_{i}) = \Phi\left(\sum_{i}\alpha_{i}\rho_{i}\right) = 0$ by linearity of $\Phi$, proving that $\CS$ is a $U$-consistent subspace.  Moreover, for any $A\in\mathcal{R}$, $A = \sum_{i}\beta_{i}\sigma_{i}$ for some $\{\beta_{i}\}\subset\mathbb{C}$ and $\{\sigma_{i}\}\subset \Omega_{\Phi}$, since $\Omega_{\Phi}$ spans $\mathcal{R}$ by assumption.  Then $\CS\ni A' = \sum_{i}\beta_{i}\sigma_{i}\otimes\Phi(\sigma_{i})$ is such that $\TrB A' = A$.  It may be seen that $\TrB(U A' U^{\dag}) = \sum_{i}\beta_{i}\Phi(\sigma_{i}) = \Phi\left(\sum_{i}\beta_{i}\sigma_{i}\right) = \Phi(A)$, so that $\Psi_{U}^{\CS}(A) = \Phi(A)$ for all $A\in\mathcal{R}$.  Since $\TrB\CS = \Span_{\mathbb{C}}\Omega_{\Phi} = \mathcal{R}$, the ``if'' part of the theorem is proved.  
\end{proof}

%

%\subsection{Proof of Corollary~\ref{cor:repStruct}}
\subsection{Proof of Corollary~1}
\begin{mycorollary}
	If $\Phi:\BHS\mapsto\BHS$ is $\mathbb{C}$-linear, Hermiticity-preserving, and trace-preserving with non-empty positive domain $\Omega_{\Phi} = \DS\cap\Phi^{-1}(\DS)$, then the restriction $\Phi\big|_{\mathcal{R}}:\mathcal{R}\mapsto\BHS$ is representable as subsystem dynamics, where $\mathcal{R} = \Span_{\mathbb{C}}\Omega_{\Phi}$.  In other words, the physically relevant part of $\Phi$ is representable as subsystem dynamics.
\end{mycorollary}
\begin{proof}
	Let $\mathcal{Q} = \Span_{\mathbb{C}}\{\rho\otimes\rho\;:\;\rho\in\Omega_{\Phi}\}$ and let $\mathcal{W}\subset \BHSB$ be the symmetric sector (i.e. the $+1$ eigenspace of the $\Ad_{\textsc{swap}}$ operator on $\mathcal{R}\otimes\mathcal{R}\subset \BHSB \simeq \BHS\otimes\BHS$.  Since $W\in\mathcal{W}$ belongs to $\mathcal{R}\otimes\mathcal{R}$, if $\{A_{i}\}$ is any basis for $\mathcal{R}$, then $W$ can be expanded as $W = \sum_{i,j}a_{ij}A_{i}\otimes A_{j}$.  Since $W\in\mathcal{W}$, $\Ad_{\textsc{swap}}(W) = W$, so that $W = \sum_{i}a_{ii}A_{i}\otimes A_{i} + \frac{1}{2}\sum_{i<j}(a_{ij}+a_{ji})(A_{i}\otimes A_{j} + A_{j}\otimes A_{i})$.  It then becomes clear that $\mathcal{W} = \Span_{\mathbb{C}}\{A_{i}\otimes A_{j} + A_{j}\otimes A_{i}\;:\; i\leq j\}$.  Since any $\rho\otimes\rho$ belongs to $\mathcal{W}$, it follows that $\mathcal{Q}\subset \mathcal{W}$.  Moreover, since $\Omega_{\Phi}$ spans $\mathcal{R}$ by assumption, there exists $\{\rho_{j}\}\subset\Omega_{\Phi}$ and coefficients $\{a_{ij}\}\subset\mathbb{C}$ such that $A_{i} = \sum_{j}a_{ij}\rho_{j}$.  It may also be noted that, since $\Omega_{\Phi}$ is a convex set, if $\rho,\sigma\in\Omega_{\Phi}$, then $(\rho + \sigma)/2\in\Omega_{\Phi}$, so that $\mathcal{Q}$ contains $\rho\otimes\rho$, $\sigma\otimes\sigma$, and $(\rho+\sigma)/2\otimes(\rho+\sigma)/2 = (\rho\otimes\rho + \sigma\otimes\sigma + \rho\otimes\sigma + \sigma\otimes \rho)/4$, and therefore $\mathcal{Q}$ contains $\rho\otimes\sigma + \sigma\otimes\rho$ for all $\rho, \sigma \in\Omega_{\Phi}$.  Then $A_{i}\otimes A_{j} + A_{j}\otimes A_{i} = \sum_{km}a_{ik}a_{jm}(\rho_{k}\otimes\rho_{m} +\rho_{m}\otimes\rho_{k})\in\mathcal{Q}$ for any $i,j$, so that $\mathcal{W}\subset\mathcal{Q}$, and therefore $\mathcal{W} = \mathcal{Q}$ and it is easy to see that $\mathcal{V} = \id\otimes\Phi(\mathcal{Q}) = \id\otimes\Phi(\mathcal{W})$.

	If $\rho\in\TrB[\CS\cap\DSB]$, then $\Psi_{U}^{\CS}(\rho) = \Phi(\rho) \in\DS$, since there exists $\rhoSB\in\CS\cap\DSB$ such that $\rho = \TrB\rhoSB$ and clearly $\Psi_{U}^{\CS}(\rho) = \TrB[U\rhoSB U^{\dag}] \in\DS$.  On the other hand, for this representation, if $\rho\in\mathcal{R}\cap\DS - \Phi^{-1}(\DS)$, then any $X\in\CS$ covering $\rho$ must be $X = \sum_{i}\alpha_{i}\rho_{i}\otimes\Phi(\rho_{i})$ for $\{\alpha_{i}\}\subset\mathbb{C}$ and $\{\rho_{i}\}\subset\Omega_{\Phi}$, where $\rho = \TrB X = \sum_{i}\alpha_{i}\rho_{i}$.  Then $\Tr_{\textsc{s}} X = \sum_{i}\alpha_{i}\Phi(\rho_{i}) = \Phi(\rho) \notin \DS$, since $\rho\notin \Phi^{-1}(\DS)$ by assumption.  Then $X$ cannot belong to $\DSB$, so $\rho\notin\TrB[\CS\cap\DSB]$.  It follows that $\TrB[\CS\cap\DSB] = \Omega_{\Phi}$ for this representation.
\end{proof}

\section{Specifications of Two \textsc{swap}-Consistent Subspaces}
	The \textsc{swap}-consistent subspace
	\begin{equation}
		\CS := \Span_{\mathbb{C}}\{\rho\jmdotimes\rho^{T}\;:\;\rho\in\DS\}
	\end{equation}
	yielding the transpose map is the 10-dimensional subspace
	\begin{align}
		\CS = \Span_{\mathbb{C}}\{&\identity, (X\jmdotimes\identity+\identity\jmdotimes X), (Y\jmdotimes\identity - \identity\jmdotimes Y), \nonumber\\
		& (Z\jmdotimes\identity + \identity\jmdotimes Z), X\jmdotimes X, Y\jmdotimes Y, Z\jmdotimes Z,\nonumber\\
		& (X\jmdotimes Y-Y\jmdotimes X), (Y\jmdotimes Z - Z\jmdotimes Y),\nonumber\\
		& (Z\jmdotimes X + X\jmdotimes Z)\}. 
	\end{align}

	The \textsc{swap}-consistent subspace
	\begin{equation}
		\CS := \Span_{\mathbb{C}}\{\rho\jmdotimes\Phi(\rho)\;:\;\rho\in\Omega_{\Phi}\}
	\end{equation}
	yielding the repolarizing map $\Phi$ is the 10-dimensional subspace	
	\begin{align}
		\CS = \Span_{\mathbb{C}}\Big\{&\identity, \big(X\otimes\identity + \frac{1}{\epsilon}\identity\otimes X\big), \big(Y\otimes\identity + \frac{1}{\epsilon}\identity\otimes Y\big),\nonumber\\
		& \big(Z\otimes\identity + \frac{1}{\epsilon}\identity\otimes Z\big), X\otimes X, Y\otimes Y, Z\otimes Z,\nonumber\\
		& \big(X\otimes Y + Y\otimes X\big), \big(Y\otimes Z + Z\otimes Y\big), \nonumber\\
		& \big(Z\otimes X + X\otimes Z\big)\Big\}
	\end{align}

\end{document}